\newtheorem{remark}{Remark}
\newtheorem{assumption}{Assumption}
\newtheorem{proposition}{Proposition}
\newtheorem{definition}{Definition}
\theoremstyle{definition}
\newtheorem{example}{Example}
\newcommand{\rd}{\mathrm{d}}
\newcommand{\rD}{\mathrm{D}}
\newcommand{\rH}{\mathrm{H}}
\newcommand{\rQ}{\mathrm{Q}}
\newcommand{\rX}{\mathrm{X}}
\newcommand{\rZ}{\mathrm{Z}}
\newcommand{\E}{\mathrm{E}}
\newcommand{\e}{\mathrm{e}}
\def\T{{ \mathrm{\scriptscriptstyle T} }}
\newcommand{\blind}{1}
\newcommand{\spacing}{1.1}
\begin{document}

\def\spacingset#1{\renewcommand{\baselinestretch}%
{#1}\small\normalsize} \spacingset{1}

\if1\blind
{
  \title{\bf Bias correction of quadratic spectral estimators}
    \author[1]{Lachlan Astfalck}
    \author[2]{Adam Sykulski}
    \author[1]{Edward Cripps}
    \affil[1]{School of Physics, Mathematics \& Computing, The University of Western Australia, Australia}
    \affil[2]{Department of Mathematics, Imperial College London, U.K.}
    
    \setcounter{Maxaffil}{0}
    \renewcommand\Affilfont{\itshape\small}
  \maketitle
} \fi

\if0\blind
{
  \bigskip
  \bigskip
  \bigskip
  \begin{center}
    {\LARGE\bf Bias correction of quadratic spectral estimators}
\end{center}
  \medskip
} \fi

\bigskip
\begin{abstract}
  The three cardinal, statistically consistent, families of non-parametric estimators to the power spectral density of a time series are lag-window, multitaper and Welch estimators.  However, when estimating power spectral densities from a finite sample each can be subject to non-ignorable bias. \cite{astfalck2024debiasing} developed a method that offers significant bias reduction for finite samples for Welch's estimator, which this article extends to the larger family of quadratic estimators, thus offering similar theory for bias correction of lag-window and multitaper estimators as well as combinations thereof. Importantly, this theory may be used in conjunction with any and all tapers and lag-sequences designed for bias reduction, and so should be seen as an extension to valuable work in these fields, rather than a supplanting methodology. The order of computation is larger than $\mathcal{O}(n \log n)$ typical in spectral analyses, but not insurmountable in practice. Simulation studies support the theory with comparisons across variations of quadratic estimators.
\end{abstract}

\noindent%
{\it Keywords:} bias correction, spectral estimation, nonparametric estimation
% \vfill

% \newpage
\spacingset{\spacing}
% \linenumbers
% \modulolinenumbers[1]

\section{Quadratic Spectral Estimators} \label{sec:intro}

Denote by $\{\rX_t\}$ a stationary zero-mean real-valued stochastic process, discretely observed at interval $\Delta$ and indexed by $t \in \mathcal{Z}$. Without loss of generality, here, and in what follows, we assume $\Delta = 1$. Assume the power spectral density $f(\omega)$ exists, and so 
\begin{equation*} \label{eqn:fourier}
  \gamma(\tau) = \frac{1}{2\pi}\int_{-\pi}^{\pi} f(\omega) \e^{i \omega \tau} \; \mathrm{d} \omega, \hspace{3mm} f(\omega) = \sum_{\tau = -\infty}^{\infty} \gamma(\tau) \e^{-i \omega \tau}
\end{equation*}
where $\gamma(\tau) = \E[\rX_t \rX_{t-\tau}]$ is the auto-covariance sequence, $\tau \in \mathcal{Z}$, and  $\omega \in [-\pi, \pi]$ is defined in radians. We make the following assumption on $\{\rX_t\}$.
\begin{assumption}
 Let $\{\rX_t\}$ be a general linear process so that $\rX_t = \sum_{i = -\infty}^\infty \theta_i \epsilon_{t-i}$ with independent and identically distributed $\epsilon_t$ with $\E[\epsilon_t] = 0$, $\E[\epsilon_t^2] = 1$, $\E[\epsilon_t^4] < \infty$, and the regularity condition $\sum_{\tau = - \infty}^\infty (1 + |\tau|) |\gamma(\tau)|< \infty$, implying $f(\omega)$ has a bounded and uniformly continuous derivative.
\end{assumption}
The predominant task in spectral density estimation is to obtain some estimate of the power spectral density; herein, we concern ourselves with the case of non-parametric estimation of $f(\omega)$. The periodogram is the fundamental estimator to the power spectral density, but is statistically inconsistent. Broadly, consistency is resolved by one of three families of non-parametric estimators: lag-windows, multitapers and Welch estimators. Delineating all non-parametric estimators into one of these three methodologies is not always obvious; for instance, Bartlett's estimator may be seen both as a special case of Welch's estimator or as a lag-window estimator with a triangular lag-sequence. This literature is too vast to adequately review here, and we defer the reader to one of the many comprehensive book-length treatments such as \cite{percival2020spectral,priestley1981spectral,politis2019time} for more detailed information. We find these to be the categorisations and naming conventions most common in the present literature, but acknowledge that alternatives exist \citep[for example, see pp.252][]{percival2020spectral}. 

Quadratic spectral estimators describe a generalising family of non-parametric spectral estimators \citep{percival1993spectral,walden1995effective,walden2000unified}. Define the complex valued demodulation process $\rZ_t = \rX_t \e^{i \omega t}$, which has corresponding auto-covariance function, $\gamma_\rZ(\tau) = \gamma(\tau) \e^{i \omega \tau}$, and power spectral density $f_\rZ(\omega') = f(\omega - \omega')$, in particular, $f_\rZ(0) = f(\omega)$. Given an $n$-dimensional finite observation from $\{\rX_t\}$, define $\rX$ and $\rZ$ as the column vectors $\rX = (\rX_0, \dots, \rX_{n-1})$ and $\rZ = (\rZ_0, \dots, \rZ_{n-1})$. In what follows, denote by superscript $\rH$ the Hermitian transpose and by an asterisk the complex conjugate. 

\begin{definition}
  Define a quadratic spectral estimator $I_\mathrm{Quad}(\omega)$ as
  \begin{equation} \label{eqn:quadratic}
    I_\mathrm{Quad}(\omega) = \rZ^\rH \rQ \rZ = \sum_{s = 0}^{n-1} \sum_{t = 0}^{n-1} \rZ^*_s \rQ_{s,t} \rZ_t
  \end{equation}
  where $\rQ_{s,t}$ is the $(s,t)$th element of the real valued and symmetric $n \times n$ matrix $\rQ$.
\end{definition}

Choosing the values of $\rQ$ determines the properties of the estimator and, as we now demonstrate, encompasses the periodogram, lag-window, multitaper, and Welch estimators. 

\begin{example}[Periodogram] \label{exa:period}
The periodogram, $I_n(\omega)$, may be defined as the squared modulus of the discrete Fourier transform of $\rX$,
  \begin{equation*}
    I_n(\omega) = \left| \sum_{t = 0}^{n-1} h_t \rX_t \e^{-i \omega t} \right|^2 = \sum_{s = 0}^{n-1} \sum_{t = 0}^{n-1} h_s h_t \rX_s \rX_t \e^{-i \omega (s-t)} = \sum_{s = 0}^{n-1} \sum_{t = 0}^{n-1} \rZ_s^* h_s h_t \rZ_t = \rZ^\rH \rQ \rZ
  \end{equation*}
  where $h_t$ represents the chosen data-taper and $\sum_{t = 0}^{n-1} h_t^2 = 1$.
  The case of $h_t = n^{-1/2}$ corresponds to the choice of `no taper' for the raw periodogram. 
  The periodogram is therefore a quadratic spectral estimator with values $\rQ_{s,t} = h_s h_t$ and $\rQ$ is of unit rank.
\end{example}

\begin{example}[Lag-window] \label{exa:lag_window}
  Define a lag-window estimator as $I_\mathrm{Lag}(\omega) = \sum_{\tau = -n+1}^{n-1} (w_\tau \cdot \hat{\gamma})(\tau) e^{-i \omega \tau}$ where $\hat{\gamma}(\tau) = \sum_{t = 0}^{n - |\tau| - 1} h_{t+|\tau|} \rX_{t+|\tau|} h_t \rX_t$ and $\{w_\tau\}$ is the lag-sequence. We may re-write $I_{\mathrm{Lag}}(\omega)$ as
  \begin{equation*}
    I_{\mathrm{Lag}}(\omega) = \sum_{s = 0}^{n-1} \sum_{t = 0}^{n-1} w_{(s-t)} h_s h_t \rX_s \rX_t \e^{-i \omega (s-t)}
  \end{equation*}
  so that $I_{\mathrm{Lag}}(\omega)$ is a quadratic spectral estimator where $\rQ_{s,t} = w_{(s-t)} h_s h_t$ and $\rQ$ may be up to full rank.
\end{example}

\begin{example}[Welch's estimator] \label{exa:welch}
  Welch's estimator is defined $I_\mathrm{Welch}(\omega) = M^{-1} \sum_{m = 0}^{M-1} I_L^m(\omega)$ where $I_L^m(\omega)$ is the periodogram calculated from the $m$th length-$L$ segment of the data. We may therefore define the $I_{L}^{m}(\omega)$ as periodograms calculated from the full length-$n$ observation, so that 
  \begin{equation} \label{eqn:welch_period}
    I_{L}^{m}(\omega) = \left| \sum_{t=0}^{n-1} h_t^m \rX_{t} \e^{-i\omega t} \right|^2
  \end{equation}
  where, here, $\{h^m_t\}$ is a length $L$ data taper that has been pre- and post-pended with $mLp$ and $n-L(1+mp)$ zeros, respectively, and $p$ is the degree of segment overlap. From Example~\ref{exa:period}, we can write \eqref{eqn:welch_period} in quadratic form $I_{L}^{m}(\omega) = \rZ^\rH \rQ^m \rZ$ where each $\rQ^m$ is the outer product of $\{h^m_t\}$ with itself. Thus, $I_{\mathrm{Welch}}(\omega) = \rZ^\rH \rQ \rZ$ is a quadratic spectral estimator where $\rQ = \frac{1}{M}\sum_{m = 0}^{M-1} \rQ^m$ and is of rank-$M$. 
\end{example}

\begin{example}[Multitaper] \label{exa:multi}
  Multitaper estimators are defined as $I_\mathrm{Multi}(\omega) =  \sum_{k = 0}^{K-1} d_k I_n^k(\omega)$, where $\sum_{k=0}^{K-1} d_k = 1$ and the $I_n^k(\omega)$ are periodograms calculated with tapering sequence $\{h_{k,t}\}$, defined orthogonally over $k$ so that $\sum_{t=0}^{n-1}h_{k,t} h_{k',t} = 0$ for $k \neq k'$. Similarly to Example~\ref{exa:welch}, $I_{n}^{k}(\omega) = \rZ^\rH \rQ^k \rZ$ where each $\rQ^k$ is the outer product of $\{h_{k,t}\}$ with itself, and $I_\mathrm{Multi}(\omega) = \rZ^\rH \rQ \rZ$ is a quadratic spectral estimator where $\rQ = \sum_{k = 0}^{K-1} d_k \rQ^k$ and is of rank-$K$.
\end{example}

\section{Statistical Properties}

Assume $\rQ$ has $\mathrm{tr}(\rQ) = 1$, as in all previous examples, and is of rank $K$ for $1 \leq K \leq n$. If $\rQ$ is positive semi-definite then $I_\mathrm{Quad}(\omega)$ is guaranteed to be non-negative. It is common to assume positive semi-definite $\rQ$; although, as this excludes certain lag-window estimators \citep[e.g.][ described in Example~6]{politis1995bias} we relax this assumption. The spectral decomposition of $\rQ$ is $\rQ = \rH \rD \rH^\T = \sum_{k=0}^{K-1} d_k \rH_k \rH_k^\T$ where $\rH$ is an orthogonal $n \times K$ matrix with columns $\rH_0, \dots, \rH_{K-1}$, and $\rD$ is a $K \times K$ diagonal matrix with values $d_0, \dots, d_{K-1}$ such that $\sum_{k=0}^{K-1} d_k = 1$ because $\mathrm{tr}(\rQ) = 1$. The $\rH_0, \dots, \rH_{K-1}$ and $d_0, \dots, d_{K-1}$ are the eigenvectors and eigenvalues of $\rQ$, respectively. By definition, the eigenvectors $\rH_k$ respect unit square-summability and so are valid tapers. Thus, any quadratic spectral estimator \eqref{eqn:quadratic} can be written as
\begin{equation*}
  I_\mathrm{Quad}(\omega) = \sum_{k=0}^{K-1} d_k \left| \sum_{t=0}^{n-1} h_{k,t} \rX_t \e^{-i \omega t} \right|^2 = \sum_{k=0}^{K-1} d_k I^k_n(\omega)
\end{equation*}
which is identical to the form of the multitaper, but with the tapers defined by the eigenvectors of $\rQ$. Defining $\mathcal{H}_n^k(\omega) = \left| \sum_{t=0}^{n-1} h_{k,t} \e^{-i \omega t} \right|^2$, the expectation of each $I^k_n(\omega)$ is $\E[I^k_n(\omega)] = (f \ast \mathcal{H}_n^k)(\omega)$. Therefore,
\begin{equation} \label{eqn:quad_exp}
  \E[I_\mathrm{Quad}(\omega)] = \sum_{k=0}^{K-1} d_k \E[I^k_n(\omega)] = \left(f \ast \bar{\mathcal{H}}_n\right)(\omega)
\end{equation}
where $\bar{\mathcal{H}}_n(\omega) = \sum_{k=0}^{K-1} d_k \mathcal{H}_n^k(\omega)$. For all $f(\omega)$ aside from white noise, $I_\mathrm{Quad}(\omega)$ is biased for finite $n$. As seen in Examples~\ref{exa:period}--\ref{exa:multi}, $\bar{\mathcal{H}}_n(\omega)$ and thus 
\begin{equation} \label{eqn:bias_def}
  \mathrm{bias}[I_\mathrm{Quad}(\omega)] = \E[I_\mathrm{Quad}(\omega)] - f(\omega) = \left(f \ast \bar{\mathcal{H}}_n\right)(\omega) - f(\omega)
\end{equation}
are dependent on the appointed tapers and lag-windows. The variance of $I_\mathrm{Quad}(\omega)$ is
\begin{equation}\label{eqn:quad_variance}
  \mathrm{var}[I_\mathrm{Quad}(\omega)] = \sum_{k = 0}^{K-1} d_k^2 \mathrm{var}[I_n^k(\omega)] + 2\sum_{j < k} d_j d_k\mathrm{cov}[I_n^j(\omega), I_n^k(\omega)] = \sum_{k=0}^{K-1}d_k^2 \mathrm{var}[I_n^k(\omega)] 
\end{equation}
where the second equality results from $\mathrm{cov}[I_n^j(\omega), I_n^k(\omega)] = 0$ due to the orthogonality across the $\rH_k$. The covariance $\mathrm{cov}[I_n^k(\omega), I_n^k(\omega+\eta)]$, from which $\mathrm{var}[I_n^k(\omega)]$ in \eqref{eqn:quad_variance} is obtained, is calculated as
\begin{equation} \label{eqn:covariance}
\begin{split}
\mathrm{cov}[I_n^k(\omega), I_n^k(\omega+\eta)] =& \left| \int_{-\pi}^{\pi} H^*_k(\omega + \eta - \omega')H_k(\omega-\omega') f(\omega')\; \rd \omega'\right|^2 + \\ 
&\quad \quad \quad \underbrace{\left| \int_{-\pi}^{\pi} H_k(\omega + \eta + \omega')H_k(\omega - \omega') f(\omega') \; \rd \omega'\right|^2}_{= \mathcal{O}(\log^2 n_0/n_0^{2})}  + \mathcal{O}(n_0^{-1})  
\end{split}
\end{equation}
where $H_k(\omega) = \sum_{t=0}^{n-1} h_{k,t} \e^{-i \omega t}$ and $n_0$ is the number of non-zero elements in $\{h_{k,t}\}$. This is an extension of Theorem~5.2.5 of \cite{brillinger2001time} to the case of general taper $\{h_{k,t}\}$ obtained by substituting Equation~(211b) of \cite{percival2020spectral}. The $\mathcal{O}(\log^2 n_0/n_0^{2})$ term is obtained in \cite{krogstad1982covariance} for $h_{k,t} = n^{-1/2}$, and extends to the case of a general taper $\{h_{k,t}\}$ using Lemma~1 of \cite{astfalck2024debiasing}. 

Assuming $\mathrm{var}[I_n^k(\omega)]$ is constant over $k$, $\mathrm{M}^{-1} = \sum_{k=0}^{K-1}d_k^2$ is an approximation of the variance reduction in $I_\mathrm{Quad}(\omega)$, as compared to the periodogram, and $I_\mathrm{Quad}(\omega)$ is an $\sqrt{\mathrm{M}}$-convergent estimator. When $\rQ$ is positive semi-definite, $\mathrm{M}^{-1} \leq 1$. In actuality, $\mathrm{var}[I_n^k(\omega)]$ is not precisely constant over $k$, and so $\mathrm{M}^{-1}$ is not the exact factor of variance reduction; however, it is useful in establishing certain properties in Remark~\ref{rem:alpha}, below. To arrive at a notion of bias-variance trade-off, we require a metric to describe the effective number of independent observations from the spectral estimator, that is, the effective sample size. We achieve this via the bandwidth properties of $I_\mathrm{Quad}(\omega)$. Define the effective sample size as 
\begin{equation} \label{eqn:zeta}
    \zeta = \frac{2 \pi}{\int_{-\pi}^\pi \mathrm{cor}[I_z(\omega), I_z(\omega+\eta)] \; \mathrm{d}\eta}
\end{equation}
where $I_z(\omega)$ is the spectral estimator $I_\mathrm{Quad}(\omega)$ of a white noise process with constant $f(\omega')$, and so $\zeta$ is independent of $\omega$. Dividing \eqref{eqn:covariance} through by $f(\omega')$ and combining over the $K$ components yields $\mathrm{cor}[I_z(\omega), I_z(\omega+\eta)]$ which 
is calculated
\begin{equation} \label{eqn:R}
R(\eta) = \mathrm{M} \sum_{j,k=0}^{K-1}d_j d_k \left| \sum_{t=0}^{n-1} h_{j,t} h_{k,t} e^{-i \eta t}\right|^2 = \mathrm{cor}[I_z(\omega), I_z(\omega+\eta)] + \mathcal{O}(n_0^{-1}).
\end{equation}
Substituting \eqref{eqn:R} into \eqref{eqn:zeta} yields $\zeta = 1 / \mathcal{B}_\mathrm{Quad} + \mathcal{O}(n_0^{-1})$, where
\begin{equation} \label{eqn:r0}
\mathcal{B}_\mathrm{Quad} = \frac{1}{2\pi}\int_{-\pi}^{\pi} R(\eta) \; \rd\eta = \mathrm{M} \sum_{j,k=0}^{K-1} \sum_{t = 0}^{n-1}  d_j d_k h_{j,t}^2 h_{k,t}^2.
\end{equation}
Equation \eqref{eqn:r0} can be understood as the spectral bandwidth of $I_\mathrm{Quad}(\omega)$, i.e., the integral length scale of the spectral estimator, and is independent of $\omega$. The effective sample size $\zeta$ admits values $1 \leq \zeta \leq n$.

\begin{proposition} \label{prop:bias}
  Given $\{\rX_t\}$ satisfies Assumption~1, and $\omega \neq 0, \pm \pi$, the bias of a quadratic spectral estimator, $I_\mathrm{Quad}(\omega)$ is of order
  \begin{equation} \label{eqn:ess_order}
    \mathrm{bias}[I_\mathrm{Quad}(\omega)] = \mathrm{E}[I_\mathrm{Quad}(\omega)] - f(\omega) = \mathcal{O}\left(\frac{\log \zeta}{\zeta}\right).
  \end{equation}
\end{proposition}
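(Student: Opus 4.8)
The plan is to decompose $I_\mathrm{Quad}(\omega)$ into its rank-one components and control each, then recombine. By the spectral decomposition $\rQ = \sum_k d_k \rH_k\rH_k^\T$ we have $\mathrm{bias}[I_\mathrm{Quad}(\omega)] = \sum_{k=0}^{K-1} d_k\, b_k(\omega)$ with $b_k(\omega) := (f \ast \mathcal{H}_n^k)(\omega) - f(\omega)$, the bias of the $k$th eigen-tapered periodogram. Since $\frac{1}{2\pi}\int_{-\pi}^{\pi}\mathcal{H}_n^k(\omega')\,\rd\omega' = \sum_t h_{k,t}^2 = 1$, I write $b_k(\omega) = \frac{1}{2\pi}\int_{-\pi}^{\pi}\mathcal{H}_n^k(\omega')\big[f(\omega-\omega')-f(\omega)\big]\,\rd\omega'$; Assumption~1 gives $f$ a bounded derivative, so $|f(\omega-\omega')-f(\omega)| \le \|f'\|_\infty\,|\omega'|$ on $[-\pi,\pi]$ and hence $|b_k(\omega)| \le \|f'\|_\infty\cdot \frac{1}{2\pi}\int_{-\pi}^{\pi}\mathcal{H}_n^k(\omega')\,|\omega'|\,\rd\omega'$. (The restriction $\omega\neq 0,\pm\pi$ is inherited from the covariance structure \eqref{eqn:covariance} underlying the definition of $\zeta$.)

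The next step is to bound the taper-only integral $\frac{1}{2\pi}\int_{-\pi}^{\pi}\mathcal{H}_n^k(\omega')|\omega'|\,\rd\omega'$ by the bandwidth $\mathcal{B}_k := \sum_t h_{k,t}^4$ of taper $k$. Writing $\mathcal{H}_n^k(\omega') = \sum_\tau \rho_k(\tau)\e^{-i\omega'\tau}$ with $\rho_k(\tau) = \sum_t h_{k,t+\tau}h_{k,t}$ (so $\rho_k(0)=1$, $\rho_k$ real and even, $|\rho_k(\tau)|\le 1$), and expanding $|\omega'|$ in its cosine series on $[-\pi,\pi]$, Parseval yields the exact identity $\frac{1}{2\pi}\int_{-\pi}^{\pi}\mathcal{H}_n^k(\omega')|\omega'|\,\rd\omega' = \frac{4}{\pi}\sum_{\tau\ge 1,\ \tau\ \mathrm{odd}}\tau^{-2}\big(1-\rho_k(\tau)\big)$, a sum of non-negative terms. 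The quantitative input needed is a crossover estimate $0\le 1-\rho_k(\tau) \le C\min(\mathcal{B}_k|\tau|,\,1)$: linear growth up to the scale $|\tau|\sim 1/\mathcal{B}_k$ set by the taper's bandwidth, and bounded beyond it. Splitting the sum at $|\tau| = 1/\mathcal{B}_k$ then gives $\mathcal{B}_k\!\!\sum_{|\tau|\le 1/\mathcal{B}_k}\!\!|\tau|^{-1} + \sum_{|\tau|>1/\mathcal{B}_k}\tau^{-2} = \mathcal{O}\!\big(\mathcal{B}_k\log(1/\mathcal{B}_k)\big)$, with the logarithm produced by the harmonic sum in the first term; it is genuinely present only for tapers whose spectral window has Fej\'er-type sidelobes, smooth tapers giving $\mathcal{O}(\mathcal{B}_k)$. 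Thus $|b_k(\omega)| = \mathcal{O}\!\big(\mathcal{B}_k\log(1/\mathcal{B}_k)\big)$.

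To finish I recombine: $|\mathrm{bias}[I_\mathrm{Quad}(\omega)]| \le \sum_k |d_k|\,|b_k(\omega)| \le C\sum_k |d_k|\,\mathcal{B}_k\log(1/\mathcal{B}_k)$, where $\sum_k|d_k| = \mathcal{O}(1)$ (and equals $1$ when $\rQ$ is positive semi-definite). One then needs $\sum_k |d_k|\,\mathcal{B}_k\log(1/\mathcal{B}_k) = \mathcal{O}\!\big(\mathcal{B}_\mathrm{Quad}\log(1/\mathcal{B}_\mathrm{Quad})\big)$; using $\mathcal{B}_\mathrm{Quad} = \mathrm{M}\sum_{j,k}d_jd_k\sum_t h_{j,t}^2 h_{k,t}^2$, the nonnegativity of $\sum_t h_{j,t}^2 h_{k,t}^2$, Cauchy--Schwarz, and concavity of $x\mapsto x\log(1/x)$ on $(0,\e^{-1})$, this reduces to controlling the individual bandwidths by the aggregate one. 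Finally \eqref{eqn:r0} gives $\zeta = \mathcal{B}_\mathrm{Quad}^{-1} + \mathcal{O}(n_0^{-1}) = \mathcal{B}_\mathrm{Quad}^{-1}(1+o(1))$, so that $\mathcal{B}_\mathrm{Quad}\log(1/\mathcal{B}_\mathrm{Quad}) = (\log\zeta/\zeta)(1+o(1))$, which is \eqref{eqn:ess_order}.

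I expect two places to require the real work. The first is the uniform crossover estimate $1-\rho_k(\tau)\le C\min(\mathcal{B}_k|\tau|,1)$: it fails for pathological choices of $\{h_{k,t}\}$ whose spectral window concentrates away from the origin (a modulated rectangular window gives bias $\mathcal{O}(1)$ while $\zeta$ remains large), so it must rest on a qualitative condition that the $\{h_{k,t}\}$ are bona fide leakage-reducing designs --- this is exactly where I would invoke Lemma~1 of \cite{astfalck2024debiasing}, which provides the passage from the classical rectangular-taper estimates to general tapers. The second is the recombination: bounding $\sum_k|d_k|\mathcal{B}_k\log(1/\mathcal{B}_k)$ by a multiple of $\mathcal{B}_\mathrm{Quad}\log(1/\mathcal{B}_\mathrm{Quad})$ needs every eigentaper carrying non-negligible weight to have bandwidth $\mathcal{B}_k = \mathcal{O}(\mathcal{B}_\mathrm{Quad})$ comparable to the aggregate --- automatic for lag-window, multitaper, Welch estimators and their combinations as ordinarily constructed, but the step where the structure of $\rQ$ genuinely enters.
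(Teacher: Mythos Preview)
Your decomposition $\mathrm{bias}[I_\mathrm{Quad}(\omega)] = \sum_k d_k b_k(\omega)$ is exactly the paper's starting point. From there the two arguments diverge. The paper controls each $b_k(\omega)$ via the \emph{support length}: it argues that the number $n_0^{(k)}$ of non-zero entries in $\{h_{k,t}\}$ must scale at least with $\zeta$ (inferred from the bandwidth formula \eqref{eqn:r0}), then appeals directly to the classical Fej\'er estimate $\E[I_n^k(\omega)]-f(\omega)=\mathcal{O}(\log n_0^{(k)}/n_0^{(k)})$, with Lemma~1 of \cite{astfalck2024debiasing} supplying the taper bound $|h_{k,t}|\le c_h (n_0^{(k)})^{-1/2}$. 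Your route instead introduces the per-taper bandwidth $\mathcal{B}_k=\sum_t h_{k,t}^4$, obtains the clean Parseval identity $\tfrac{1}{2\pi}\int\mathcal{H}_n^k(\omega')|\omega'|\,\rd\omega' = \tfrac{4}{\pi}\sum_{\tau\ \mathrm{odd}}\tau^{-2}(1-\rho_k(\tau))$, and then converts a crossover bound on $1-\rho_k(\tau)$ into $|b_k(\omega)|=\mathcal{O}(\mathcal{B}_k\log(1/\mathcal{B}_k))$. This is a more quantitative and more transparent derivation of the logarithmic loss than the paper's bare citation of \cite{fejer1910lebesguessche}; in particular your explanation of \emph{why} the $\log$ appears (the harmonic sum from the small-$\tau$ regime) is informative.

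The structural inputs you flag as needing real work are precisely where the paper is also terse. Your crossover estimate $1-\rho_k(\tau)\le C\min(\mathcal{B}_k|\tau|,1)$ plays the same role as the paper's claim that a taper with $\zeta$ non-zero elements is bounded by $c_h\zeta^{-1/2}$: both fail for modulated or otherwise pathological tapers and both are deferred to Lemma~1 of \cite{astfalck2024debiasing}. Likewise, your recombination step---bounding $\sum_k|d_k|\mathcal{B}_k\log(1/\mathcal{B}_k)$ by a multiple of $\mathcal{B}_\mathrm{Quad}\log(1/\mathcal{B}_\mathrm{Quad})$---mirrors the paper's assertion that \eqref{eqn:r0} forces every eigentaper's support to scale with $\zeta$; neither argument makes this step fully rigorous for arbitrary $\rQ$, and both implicitly rely on the standard constructions (lag-window, Welch, multitaper) rather than a general $\rQ$. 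So your proof is not more gappy than the paper's; it is a different, arguably cleaner, packaging of the same two non-trivial inputs.
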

\begin{proof}
  Re-write \eqref{eqn:bias_def} as $\mathrm{bias}[I_\mathrm{Quad}(\omega)] = \sum_{k = 0}^{K-1} d_k \{\E[I_n^k(\omega)] - f(\omega)\}$, and so $\mathcal{O}(\mathrm{bias}[I_\mathrm{Quad}(\omega)])$ is dominated by $\mathrm{max}_k[\{\mathcal{O}\{\E[I_n^k(\omega)] - f(\omega)\}]$. The maximum value for $\E[I_n^k(\omega)] - f(\omega)$ is achieved when the number of non-zero elements of $\{h_{k,t}\}$ are minimised. Equation \eqref{eqn:r0} implies a minimum number of non-zero elements in each $\{h_{k,t}\}$ that scales with $\zeta$ as $n$ grows.
  Following Lemma~1 of \cite{astfalck2024debiasing}, a taper with a minimum of $\zeta$ non-zero elements is bounded by $c_h \zeta^{-1/2}$ for some constant $c_h$, and so \eqref{eqn:ess_order} is established following the result of \cite{fejer1910lebesguessche}.
\end{proof}

\begin{remark} \label{rem:alpha}
  To satisfy $\sum_{t = 0}^{n-1} h_{k,t}^2 = 1$ with increasing $n$ for each $\{h_{k,t}\}$, all tapers scale at maximum $h_{k,t} \propto \zeta^{-1/2}$: see the proof of Proposition~\ref{prop:bias}. Consequently, $\zeta \mathrm{M} \propto n$ for a quadratic estimator, $I_\mathrm{Quad}(\omega)$. By apportioning $\zeta = \mathcal{O}(n^\alpha)$ and $\mathrm{M} = \mathcal{O}(n^{1-\alpha})$ the expected bias-variance trade-off is obtained. As $\mathrm{var}[I_\mathrm{Quad}(\omega)] = \mathcal{O}(\mathrm{M}^{-1})$, and from Proposition~1, the optimal value of $\alpha$ with respect to mean-squared-error is $\alpha \approx 1/3$. Further regularity assumptions on $f(\omega)$, or parameterisations of the $h_{k,t}$, may reduce the bias in \eqref{eqn:ess_order} and further lower the optimal value of $\alpha$, thus allowing for more variance reduction. We show two instances of this in Examples~\ref{exa:infinite_lag} and \ref{exa:sinusoidal}.
\end{remark}

\begin{example}[Parameterisation of Bartlett's estimator]
Bartlett's estimator arises as a special case of Welch's estimator and, as in Example~\ref{exa:welch}, is equivalent to a quadratic estimator with $d_k = 1/K$ and $h_{k,t}$ is a length-$L$ taper with values $L^{-1/2}$ pre- and post-pended with $kL$ and $n - L(k+1)$ zeros. Substituting these values in \eqref{eqn:r0} yields $\zeta = L$ and so $ML = n$ with equality, leading to the optimal choice of $\alpha \approx 1/3$.
\end{example}

\begin{example}[Parameterisation of an infinite-order flat-top lag-sequence] \label{exa:infinite_lag}
  \cite{politis1995bias,politis1999multivariate} specify the infinite-order flat-top lag-sequence for univariate $\{\mathrm{X}_t\}$, with
  \begin{equation*}
    w(\tau) = \begin{cases} 1 & |\tau| \leq 1/a \\ g(\tau; a) & 1/a < |\tau| \leq c/a \\ 0 & |\tau| > c/a \end{cases}
  \end{equation*}
  where $g(\tau; a)$ is a continuous real-valued function, defined over $\tau$ and parameterised by $a$, that satisfies $g(\tau; a) = g(-\tau; a)$, $g(\tau; 1) = g(\tau/a; a)$, $g(1/a; a) = 1$ and $g(c/a; a) = 0$. If $f(\omega)$ has $r$ bounded and continuous derivatives, then $\mathrm{bias}[I_\mathrm{Quad}(\omega)] = \mathcal{O}(\zeta^{-r})$. Calculated similar to Remark~\ref{rem:alpha} it is asymptotically optimal to choose $\alpha = 1/(2r + 1)$, where Assumption~1 states $r \geq 1$.
\end{example}

\begin{example}[Parameterisation of a sinusoidal multitaper] \label{exa:sinusoidal}
Sinusoidal multitapers are discrete approximations to minimum bias multitapers and are specified by
  \begin{equation*}
    h_{k, t} = \left(\frac{2}{n+1}\right)^{1/2} \sin \left\{ \frac{(k+1)\pi t}{n+1}\right\}
  \end{equation*}
  with weights $d_k = 1/K$ for $k \in \{0, \dots, K-1\}$ \citep{riedel1995minimum}. This leads to $\mathcal{B}[I_\mathrm{Quad}(w)] = (K+1)/(n+1)$ \citep{walden1995effective}, thus $\zeta \propto n/K$, $\mathrm{M} = K$ and $\zeta \mathrm{M} \propto n$ as desired. As shown in \cite{riedel1995minimum}, the bias of a Sinusoidal multitaper is $\mathcal{O}(\zeta^{-2})$, and therefore suggests an optimal choice of $\alpha = 1/5$.
\end{example}

\begin{remark}
Studying the general theory of quadratic estimators provides a principled framework with which to combine classes of non-parametric spectral estimators that are typically siloed. For instance, multitaper estimators based on Slepian sequences with large $K$ are known to re-introduce bias as $K$ increases; but without increasing $K$ we cannot resolve variance. Often in scientific application there are requirements for variance to be resolved under some threshold and may require a choice of $\alpha$ that does not necessarily adhere to asymptotic arguments. For example, Welch and multitaper methods can be combined by segmenting the time-series, calculating multitaper estimates on each segment for some fixed $K$, and then averaging the segments to resolve variance; see \cite{schmidt2022spectral} for an example. Such an estimator remains a quadratic estimator, and so we may obtain the statistical properties as described herein. Historically, quadratic estimators were predominantly discussed to motivate the use of constructed multitaper estimators whereby the $\{h_{k,t}\}$ are specified directly rather than as a consequence of specifying $\rQ$. However, as discussed above, this need not be the only case. 
\end{remark}

\section{Bias correction of quadratic spectral estimators}

\subsection{Bias correction}

Assume a family of bases, such that the power spectral density may be represented as $f(\omega; \vartheta) = \sum_{s = 0}^{S-1} a_s b_s(\omega)$ where $b_s(\omega)$ is the $s$th basis function with corresponding auto-correlation function $\rho_s(\tau)$ obtained via an inverse Fourier transform, and $\vartheta = (a_0, \dots, a_{S-1})$. \cite{astfalck2024debiasing} proposed a methodology to debias Welch's estimator to $f(\omega)$ whereby the bases are intentionally biased, basis coefficients are inferred by fitting the biased bases to the estimate of $f(\omega)$ to obtain the $\hat{a}_s$ and a bias corrected estimate of $f(\omega)$ is obtained from $\sum_{s = 0}^{S-1} \hat{a}_s b_s(\omega)$. We now present the theory to extend these ideas to the general class of quadratic spectral estimators.

Define the Fourier frequencies $(\omega_0, \dots, \omega_{n-1}) = 2 \pi n^{-1} (-\lfloor n/2 \rfloor, \dots, \lceil n/2 \rceil-1)$, the vector $\mathrm{b}_s = \{b_s(\omega_0), \dots, b_s(\omega_{n-1})\}$ containing the evaluations of the $s$th basis over the Fourier frequencies, and the $S \times n$ matrix $\mathrm{B} = (\mathrm{b}_0^\T, \dots, \mathrm{b}_{S-1}^\T)$ that stores these evaluations over all $S$ bases. We bias the bases so that, $\check{b}_s(\omega) = (b_s \ast \bar{\mathcal{H}}_n)(\omega)$, where $\bar{\mathcal{H}}_n(\omega)$ defines the bias in $I_{\mathrm{Quad}}(\omega)$, see \eqref{eqn:quad_exp}. We calculate $\check{\mathrm{b}}_s = \{\check{b}_s(\omega_0), \dots, \check{b}_s(\omega_{n-1})\}$ via
\begin{equation} \label{eqn:bias_basis}
  \check{b}_s(\omega) = (b_s \ast \bar{\mathcal{H}}_n)(\omega) = 2 \times \mathrm{Re} \left[ \sum_{k = 0}^{K-1} d_k \left\{  \sum_{\tau = 0}^{n-1} \left(\sum_{t = 0}^{n - \tau - 1} h_{k,t} h_{k,t+\tau} \right) \rho_r(\tau) e^{-i \omega \tau} \right\}\right] - 1
\end{equation}
and further define the $S \times n$ matrix of biased bases evaluations as $\check{\mathrm{B}} = (\check{\mathrm{b}}_0^\T, \dots, \check{\mathrm{b}}_{S-1}^\T)$.

We model $I_{\mathrm{Quad}}(\omega) = \sum_{s=0}^{S-1} a_s \check{b}_s(\omega) + \epsilon(\omega)$, where $\mathrm{var}[\epsilon(\omega)] = \mathrm{var}[I_{\mathrm{Quad}}(\omega)]$ and for sufficiently large $\mathrm{M}$, the central limit theorem implies that $\epsilon(\omega)$ may be considered Gaussian. We expect there to be non-ignorable correlations between the frequencies for certain choices of $\mathrm{Q}$, which as discussed, is well represented by $R(\eta)$ in \eqref{eqn:R}. Different to \cite{astfalck2024debiasing}, we specify a circulant correlation matrix $\mathrm{W}$ with $(i,j)$th values $\mathrm{W}_{i,j} = \mathrm{cor}[I_\mathrm{Quad}(\omega_i), I_\mathrm{Quad}(\omega_j)] \approx R(\omega_i - \omega_j)$ and solve the weighted least squares problem
\begin{equation} \label{eqn:wls}
  \hat{\vartheta} = \underset{\vartheta}{\arg\min} \left\{ (\mathrm{I}_\mathrm{Quad} - \check{\mathrm{B}}^\T \vartheta)^\T \mathrm{V}^{-1} (\mathrm{I}_\mathrm{Quad} - \check{\mathrm{B}}^\T \vartheta) \right\}
\end{equation}
where $\mathrm{I}_\mathrm{Quad} = \{I_\mathrm{Quad}(\omega_0), \dots, I_\mathrm{Quad}(\omega_{n-1})\}$, $\mathrm{V} = \Gamma \mathrm{W} \Gamma$, and $\Gamma$ is a diagonal matrix with ($i,i$)th values $\Gamma_{i,i} = \mathrm{sd}[I_\mathrm{Quad}(\omega_i)]$ which, following Theorem~2 of \cite{astfalck2024debiasing}, is approximated by $I_\mathrm{Quad}(\omega_i)$. The basis coefficients $\hat{\vartheta}$ in \eqref{eqn:wls} are analytically available via the standard weighted least squares solution.

\subsection{Computation}

The computational order of the bias correction of the quadratic estimator is $\mathcal{O}\{\mathrm{max}(n S^2$, $nKS \log n)\}$, where $S$ is the number of bases and $K$ is the rank of $\rQ$. The first term is the computation required to calculate the solution to \eqref{eqn:wls} and the second term is the computation required to calculate the $\check{b}_s(\omega)$ in \eqref{eqn:bias_basis}. To ensure numerical stability, $S$ should scale at maximum with the effective sample size $\zeta$ so that $S = \mathcal{O}(n^\alpha)$, where $\alpha$ has a maximum value of $\alpha = 1/3$ (see Remark~\ref{rem:alpha}) and so $\mathcal{O}(nS^2) = \mathcal{O}(n^{5/3})$. The matrix $\mathrm{V}$ is circulant and so the inverse in \eqref{eqn:wls} is calculated with order $\mathcal{O}(n \log n)$ and does not affect the computational order of \eqref{eqn:wls}. For the computation in \eqref{eqn:bias_basis}, if $K \propto M = \mathcal{O}(n^{2/3})$, as in Welch's estimator and certain multitapers, then $\mathcal{O}(nKS \log n) = \mathcal{O}(n^2 \log n)$; if $K = n$ as in lag-windows,  $\mathcal{O}(nKS \log n) = \mathcal{O}(n^{7/3} \log n)$. At first, these rates may seem large, but they are a result of generalising the theory so as to establish the statistical properties. Instead, \eqref{eqn:bias_basis} may be computed faster for Welch and lag-window estimates: Welch estimates follow \cite{astfalck2024debiasing} and have order $\mathcal{O}(n \log n)$; lag-windows are computed similar to Equation~(9) of \cite{sykulski2019debiased} with the addition of a lag-sequence and have order $\mathcal{O}(n^{5/3}\log n)$. As opposed to \cite{astfalck2024debiasing} who retain $\mathcal{O}(n \log n)$ common to spectral analyses, we find a worse-case scenario rate of $\mathcal{O}(n^2 \log n)$, corresponding to a multitaper bias correction. Although this is much larger than $\mathcal{O}(n \log n)$, computation of an $\mathcal{O}(n^2 \log n)$ estimator is still possible for $n = \mathcal{O}(10^5)$ on a modern laptop computer. 

\subsection{Flexible basis functions} \label{sec:flex_bases}

% https://www.whitman.edu/documents/academics/majors/mathematics/2014/owensla.pdf

% https://epubs.siam.org/doi/pdf/10.1137/S1064827595291996?casa_token=dsMnSeUGq_QAAAAA:1X9_kXvcRPZtsuNL4c3KOZIsHQX6qkZ_deF97dERcy2xuvCiM2XbhQ90GPDRiFmSv1JF1Drug_U

In \cite{astfalck2024debiasing} the $b_s(\omega)$ were defined as rectangular bases, in effect representing a Riemannian approximation to $f(\omega)$. Assumptions on the order of differentiability of $f(\omega)$ can inform the bias that results from the Riemannian approximation: $\mathcal{O}(S^{-1})$ if $f(\omega)$ is of bounded first derivative, and $\mathcal{O}(S^{-2})$ if $f(\omega)$ is of bounded variation. The standard Riemannian basis is a polynomial of order 0 and the integrated mean squared errors can be improved upon by employing basis functions with increasing polynomial order and thus increasing degrees of differentiability. For instance, the trapezoidal rule has integrated mean square error $\mathcal{O}(S^{-3})$ and corresponds to a polynomial basis of order 1, and Simpson's rule has integrated mean square error $\mathcal{O}(S^{-4})$ and corresponds to a polynomial basis of order 2.

For the $s$th basis, $b_s(\omega)$, define $\omega^\mathrm{c}_s$ as the basis centre, $\delta_s$ as the basis width, and assume $b_s(\omega)$ to be $p$-times differentiable. We define $\rho_s(\tau)$, the auto-correlation function that corresponds to $b_s(\omega)$, as
\begin{equation}
  \rho_s(\tau; \omega^\mathrm{c}_s, \delta_s) = \left\{\delta_s^{-\frac{1}{p+1}} \mathrm{sinc}(\delta_s \tau) \exp(i \omega^\mathrm{c}_s \tau)\right\}^{p+1}
\end{equation}
where $\mathrm{sinc}(\cdot)$ is the normalised sinc function. For $p=0$, basis $b_s(\omega)$ is $b_{s}^0(\omega) = \mathrm{rect}(\omega/\delta_c-\omega_c)$, where $\mathrm{rect}(\cdot)$ is the rectangular function. Higher orders are obtained via the recursion
\begin{equation}
  b_{s}^p(\omega) = (b_{s}^{p-1} \ast b_{s}^0)(\omega)
\end{equation}
where the $b_{0}^p(\omega), \dots, b_{S}^p(\omega)$ define a
family of $p$-times differentiable functions, analogous to $p$-order B-splines defined over a circular domain. Choosing large $p$ imposes an assumption on the regularity conditions of $f(\omega)$; if $p$ is chosen correctly then an optimal rate of convergence, with respect to the basis approximation, may be guaranteed. However, if $p$ is overestimated then this may impose an assumption of smoothness on the underlying function that is inappropriate and so may re-introduce bias. In practice when $f(\omega)$ is unknown, $p$ must be estimated; see \cite{politis1995bias}, who demonstrate an empirical methodology to estimate $p$ from the empirical auto-covariance sequence $\hat{\gamma}(\tau)$.

\section{Simulation study with canonical {\sc AR(4)} model} \label{sec:sims}

\begin{figure} [b!]
  \centering
  \includegraphics[width = \linewidth]{"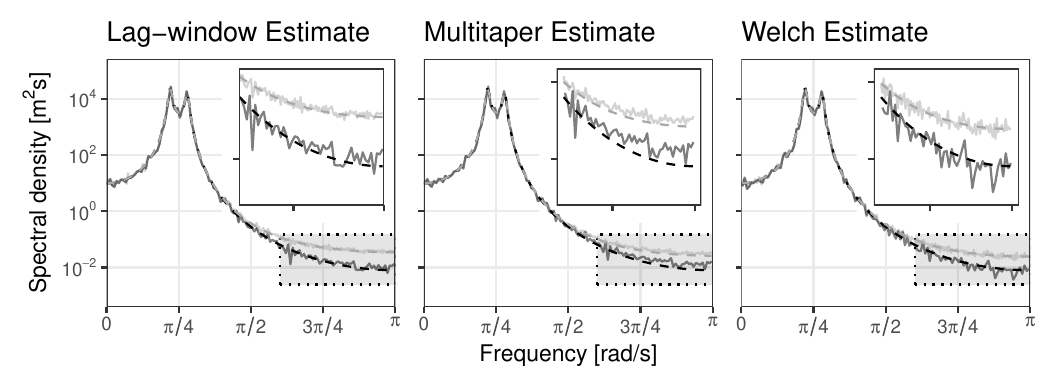"}
  \caption{Standard and debiased lag-window, multitaper and Welch estimates of a random {\sc{ar}(4)} process. The standard and debiased estimates are shown by the solid grey and black lines, respectively; the true process spectrum is shown by the black dashed line and the expectation of the standard estimator by the grey dashed line. Plot insets correspond to the shaded dotted boxes. Parameterisations of the {\sc{ar}(4)} process and the estimators are given in the main text.}
  \label{fig:single}
\end{figure}

We demonstrate our general bias correction technique on lag-window, multitaper and Welch estimates of the canonical {\sc AR(4)} model of \cite{percival1993spectral}, defined as $\rX_t = \sum_{p = 1}^4 \phi_p \rX_{t - p} + \epsilon_t$ where $\epsilon_t \sim \mathcal{N}(0, \sigma^2)$ and $\{\phi_1, \phi_1, \phi_1, \phi_1, \sigma\} = \{2.7607, -3.8106, 2.6535$, $-0.9238, 1\}$. We first present a bias correction to a single sample of length $n = 2^{14}$ for each of the three estimators, shown in Figure~\ref{fig:single}. We set $p=1$ for the bases, and parameterise each of the three estimators so as to give similar expected bias: the lag-window estimator uses a modified Daniell of width $M = 2^5$; the multitaper estimator specifies the tapers as $K = 2^5$ Slepian sequences with a time-bandwidth parameter of $2^4$; and the Welch estimator sets $L = 2^9$, $M = 2^5$, with zero-overlap and employs a Hamming data-taper on each segment. Estimates are plotted in Figure~\ref{fig:single} by the grey solid lines with their respective expectations, calculated as per \eqref{eqn:quad_exp}, plotted with the grey dashed lines. The bias corrected estimates are all given by the black solid lines, and the true $f(\omega)$ of the {\sc AR(4)} model is given by the black dashed line. The computational time to bias correct the lag-window and multitaper estimates was $\sim200\;\mathrm{ms}$, and $\sim5\;\mathrm{ms}$ for the Welch estimate. In all cases, the debiased estimator provides a better estimate of the true spectrum across all frequencies. Finally, in Figure~\ref{fig:ensemble} we show performance of the bias corrected estimators over a 1000-member ensemble of simulations and as a function of $\mathrm{M}$, the factor of variance reduction, with fixed $n/\mathrm{M}=2^{9}$ for each simulation. We observe a clear reduction of bias and root-mean-squared-error, across all estimators, and for all values of $\mathrm{M}$. An \texttt{R} implementation of the code is available at \url{github.com/astfalckl/dquad}.

\begin{figure}[h!]
  \centering
  \includegraphics[width = 0.95\linewidth]{"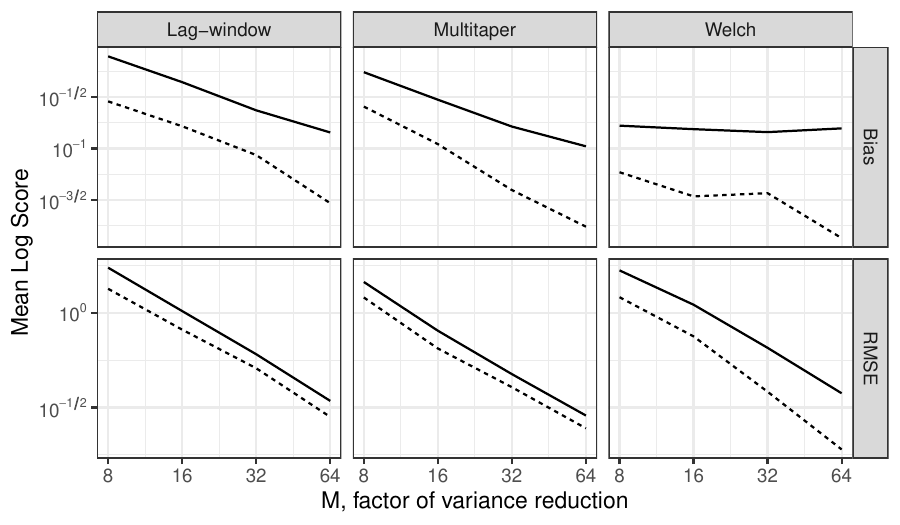"}
  \caption{Estimates of bias and root-mean-squared-error of the standard (solid) and debiased (dashed) lag-window, multitaper and Welch estimates as a function of the factor of variance reduction, $\mathrm{M}$. Each graphed point is obtained by calculating the metric from a 1000 member ensemble, for each frequency, and aggregating  over frequency by the mean log value. Parameterisations of the {\sc{ar}(4)} process and the estimators is given in the main text.}
  \label{fig:ensemble}
\end{figure}

% \newpage

\section*{Acknowledgement}
All authors are supported by the ARC ITRH for Transforming energy Infrastructure through Digital Engineering (TIDE), Grant No. IH200100009.

\spacingset{1}

\bibliographystyle{biometrika}
\bibliography{paper-ref}
\end{document}